\theoremstyle{plain}
\newcommand{\bC}{\mathbb C}
\newcommand{\bR}{\mathbb R}
\newcommand{\cA}{\mathcal A}
\newcommand{\mt}{\mapsto}
\newcommand{\ra}{\rightarrow}
\newcommand{\lra}{\longrightarrow}
\newcommand{\lmt}{\longmapsto}
\newtheorem{definition}{Definition}
\newtheorem{thm}[definition]{Theorem}
\newtheorem{prop}[definition]{Proposition}
\numberwithin{equation}{section}
\numberwithin{definition}{section}
\begin{document}
\title{Smooth structures on the field of prequantum Hilbert spaces}
\author{R\'obert Sz\H{o}ke}
\address{Department of Analysis, Institute of Mathematics,
ELTE E\"otv\"os Lor\'and University,
P\'azm\'any P\'eter s\'et\'any 1/c, Budapest 1117, Hungary,
ORCiD:0000-0002-8723-1068}

%\footnote {This research was partially supported by 
%  OTKA grant}
% N81203.} 
%\endauthor

\email{rszoke@cs.elte.hu}
\thanks{This research was   supported by OTKA grant K112703}

%\endemail
%\endaddress
%\acknowledgement We are
%\endacknowledgement
\keywords{adapted complex structures, geometric quantization, Hilbert fields} 
%\subjclassy{2000}
\subjclass[2000]{53D50, 53C35, 32L10, 70G45,65}

\begin{abstract} 
When there is a family of complex structures on the phase space,
parametrized by a set $S$, the prequantum Hilbert spaces produced by geometric
quantization, using the half-form correction, also depends on these parameters.
This way we obtain a field of Hilbert spaces $p:H^{pr Q}\ra S$. We show that
this field can have natural inequivalent smooth Hilbert bundle structures.
\end{abstract}
%\endtopmatter
%\document
\maketitle
\section{Introduction}\label{S:intr}

Let    $M^m$ be an
$m$--di\-men\-si\-o\-nal compact Riemannian manifold, that is
 the
configuration space of a classical mechanical system.
 Geometric quantization  intends  to construct a Hilbert space
 (the quantum Hilbert space) associated to
this system, in a natural way.

According to the recipe of Kostant and Souriau 
\cite{Ko,So,Wo},
the first step in this process,
   is to pass to phase space $(N,\omega)$,  where
$\omega$ is a symplectic form, and then to choose
a Hermitian line bundle with connection $E\to N$,
with curvature  $-i\omega$.
 The  prequantum Hilbert space $H^{pr Q}$ then
  consists of the $L^2$ sections of $E$. For physical reasons this Hilbert space  is too big
  and one needs to reduce its size.
 Hence the next step is to select a  Hilbert subspace of $H^{pr Q}$.
  Frequently this is done with the help of  a complex structure on $N$.
  If $\omega$ is a K\"ahler form, $E$ also inherits a holomorphic structure
  and
  one can take the quantum Hilbert space $H$
  to be the holomorphic $L^2$ sections of $E$. Often one  includes in this 
construction the so called half-form correction. Assume
that the   canonical
 bundle $K_N$ admits 
   a square root $\kappa$. 
Then the corrected prequantum Hilbert space  consists of 
the  $L^2$   sections of $E\otimes\kappa$ and the corrected
quantum Hilbert space
$H^{corr}$ consists of 
the  holomorphic $L^2$   sections of $E\otimes\kappa$.

In certain situations not only one but an entire family (parametrized
by a set $S$) of complex
structures exists on $N$ resulting a family of quantum Hilbert spaces $H_s$
(or $H^{corr}_s$).
The question then arises: is it possible to identify these Hilbert spaces in
a canonical way?

Suppose the set $S$ admits a  smooth manifold structure.
Axelrod et al. \cite{ADW} and  Hitchin \cite{Hi} suggest  to view the family 
     $\{H_s : s\in S\}$ (or 
$H^{corr}_s$) 
as the fibers of a   Hilbert bundle $H\ra S$ (resp. $H^{corr}\ra S$)
endowed with some Hermitian 
connection, the quantum connection.
In situations when this is true, parallel transport along a curve in
$S$ would yield a unitary map between different fibers. If furthermore the
quantum connection was flat, we would even have a (locally) path-independent
method of identification.

Now  in the uncorrected situation $H^{pr Q}$ is fixed, it does not depend
on the parameter $s$.
Since $H_s$ are all closed subspaces of this Hilbert space,
to implement the idea above, we could 
 try to view this family as the fibers of a Hilbert subbundle $H\ra S$
of the
trivial Hilbert bundle $\mathcal L:=S\times H^{pr Q}\ra S$.
In this case the quantum connection
would arise from the canonical flat hermitian connection on $\mathcal L$ by
orthogonal projection.
In very specific situations (\cite{ADW,Wo})  this works, but we are not
aware of any general theorem that would guarantee that $H\ra S$ is indeed a
Hilbert subbundle of $L$.

In the corrected version a further difficulty arises:
the prequantum Hilbert
spaces $H^{pr Q}_s$
(the $L^2$ sections of $E\otimes\kappa_s$) themselves are no longer
fixed but form a family of which the corrected quantum Hilbert spaces form
a family of Hilbert subspaces.
%(Here we assume that at least the $\kappa_s$ bundles are  unique for
%every $s$.)

Thus before we
 try to show that $\{H^{corr}_s, s\in S\}$ is a ``subbundle'' of
 $\{H^{pr Q}_s, s\in S\}$,
 we need to see 
 whether  $\{H^{pr Q}_s,s\in S\}$ forms  a Hilbert bundle or not.

The main purpose of this note
 is to demonstrate that in certain cases there are at least two natural 
 inequivalent
 ways to make $\{H^{pr Q}_s\}$ a Hilbert bundle. The topology on these bundles
 is the same, but their smooth structure (and therefore the set of smooth
 sections) are different.
This is the content of Theorem~\ref{T:51}.

Therefore it is better not to try to explain the smooth structure
(and the quantum connection)
on $H^{corr}$ (as in \cite{ADW, FMMN1, FMMN2})  through $H^{pr Q}$ but rather
do it directly, using the family of complex structures defined on $N$.  
This was our motivation and reason why 
we introduced in \cite{LSz3} the notion of smooth and analytic fields
of Hilbert spaces generalizing Hilbert bundles with a hermitian connection.
 It is not clear  whether $H^{corr}\ra S$ really forms a smooth
 Hilbert subbundle or not 
in $H^{pr Q}$ equipped with either of the smooth structures considered in
Theorem~\ref{T:51}.

\section{A unitary representation of $\cA_+$}

The  affine semigroup $\mathcal A$ is the  semigroup
of affine transformations $\sigma:\bR\to\bR$, $t\mt a+bt$, where $a,b\in\bR$.
 Here $a$ and $b$ provide
 global coordinates  on $\mathcal A,$ and identify it with $\bR^2$. With this
single  coordinate chart $\cA$
 becomes a Lie semigroup. 
The subset of $\cA$  with $b>0$ forms a Lie (sub)group, that we shall
denote by $\mathcal A_+$. In the above coordinates 
 $\mathcal A_+$  is identified,  
as a smooth manifold, with the upper half plane $S$.

$\mathcal A_+$ acts naturally not only on $\bR$ but on $\bC$  
as well.
 For an $s\in S$,  let $\sigma_s\in \cA_+$ be the unique
 element with  $\sigma_s(i)=s$. The map $S\ra \mathcal A_+$, $s\mt \sigma_s$ is
 simply the inverse of the coordinate chart of $\mathcal A_+$.
 
The curves $\alpha, \beta:\bR\ra\mathcal A_+$ defined as
\begin{equation}\label{E:11}
\alpha(u)=\{t\mt u+t\},\quad
\beta(u)=\{t\mt e^ut\}
\end{equation}
are 1-parameter subgroups
in $\mathcal A_+$ and $\dot\alpha(0), \dot\beta(0)$ form a basis in its Lie
algebra.
Define the homomorphism $\chi:\mathcal A_+\ra (0,\infty)$ 
by  $\chi(\sigma)=b$ if
$\sigma(t)=a+bt$.

Let now $M^m$ be an $m-$dimensional, compact, smooth manifold equipped with a 
Riemannian metric.
 We take the {\it phase space $N$} to be the manifold of parametrized geodesics
 $x:\bR\ra M$. 
 Any $t_0\in\bR$ 
induces a diffeomorphism $N\ni x\mt\dot x(t_0)\in TM$, and  the pull back of 
the  canonical 
symplectic form of $TM\approx T^*M$ is independent of $t_0$ (cf. \cite{Wo})
that
we shall denote  by 
$\omega$.

Associating with a point $q\in M$ the constant geodesic
 $\equiv q$, identifies $M$ as a submanifold in $N$. In the $TM$ picture this
 is just the zero section of the tangent bundle.
 
Composition of geodesics with the elements of $\mathcal A$ defines a right action
of $\mathcal A$ on $N$. Induced
by this action, we get the
vector fields $\mathcal X, \mathcal Y$ on $N$, corresponding to $\dot\alpha(0)$
and $\dot\beta(0)$.  
When $N$ is replaced by $TM$, $\mathcal X$ is the vector field induced by the
geodesic flow and $\mathcal Y$ corresponds to the Euler vector field.
 For  $\sigma\in\mathcal A$, $x\in N$ let $A_\sigma:N\ra N$ be defined by 
$A_\sigma x=x\circ\sigma$. As one easily checks,
\begin{equation}\label{E:12}
A^*_\sigma\omega=\chi(\sigma)\omega.
\end{equation}
Let
$\mathcal E=\frac{\omega^m}{m!}$ be the Liouville volume form on $N$.
From \eqref{E:12} we get
\begin{equation}\label{E:13}
A^*_\sigma(\mathcal E)=\chi(\sigma)^m\mathcal E.
\end{equation}
Denote by $L^2(N,\mathcal E)$ the complex valued square integrable
(w.r.t. the volume form $\mathcal E$) functions
on $N$ and $U(L^2(N,\mathcal E))$ the unitary self maps of this  Hilbert space.
\begin{thm}\label{T:21} With $\sigma\in\mathcal A_+$, the map
\begin{equation}\label{E:14}
\begin{matrix} \rho(\sigma):L^2(N,\mathcal E)&\ra&L^2(N,\mathcal E)\\
f&\mt&\chi(\sigma)^{\frac{m}{2}}f\circ A_\sigma
\end{matrix},
\end{equation}
is unitary and yields a unitary representation
\[
\rho:\mathcal A_+\ra U(L^2(N,\mathcal E)).
\]
The map
\begin{equation}\label{E:15}
\varrho:\mathcal A_+\times L^2(N,\mathcal E)\longrightarrow L^2(N,\mathcal E),
\end{equation}
defined by $\varrho(\sigma,f):=\rho(\sigma)f$ is continuous, but not
differentiable.
\end{thm}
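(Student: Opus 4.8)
The plan is to prove the three assertions in turn: unitarity of each $\rho(\sigma)$, the homomorphism property, and the topological/differentiability claims about $\varrho$. For unitarity, I would compute directly: for $f,g\in L^2(N,\mathcal E)$,
\[
\langle\rho(\sigma)f,\rho(\sigma)g\rangle=\chi(\sigma)^m\int_N (f\circ A_\sigma)\overline{(g\circ A_\sigma)}\,\mathcal E,
\]
and then change variables via the diffeomorphism $A_\sigma$, using the identity $A_\sigma^*\mathcal E=\chi(\sigma)^m\mathcal E$ from \eqref{E:13}. The two factors of $\chi(\sigma)^m$ cancel — one from the normalizing constant $\chi(\sigma)^{m/2}$ squared, one from the Jacobian — giving $\langle\rho(\sigma)f,\rho(\sigma)g\rangle=\langle f,g\rangle$. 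Since $A_\sigma$ is invertible with $A_\sigma^{-1}=A_{\sigma^{-1}}$, the map $\rho(\sigma)$ is a bijective isometry, hence unitary. The homomorphism property $\rho(\sigma\tau)=\rho(\sigma)\rho(\tau)$ follows from $A_{\sigma\tau}=A_\tau\circ A_\sigma$ (composition of geodesics with affine maps reverses order) together with the cocycle identity $\chi(\sigma\tau)=\chi(\sigma)\chi(\tau)$; one checks this on a test function by a short computation.

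For continuity of $\varrho$, I would first establish strong continuity of $\rho$, i.e. that $\sigma\mapsto\rho(\sigma)f$ is continuous for each fixed $f$. The standard route: verify it on a dense subspace — say compactly supported continuous functions, or smooth functions — where $f\circ A_\sigma\to f\circ A_{\sigma_0}$ in $L^2$ as $\sigma\to\sigma_0$ by dominated convergence (using joint continuity of $(\sigma,x)\mapsto A_\sigma x$ and uniform integrable domination on the relevant sets), and the scalar $\chi(\sigma)^{m/2}$ depends continuously on $\sigma$. Then extend to all of $L^2$ using that the operators $\rho(\sigma)$ are uniformly bounded (indeed unitary), via the usual $\epsilon/3$ argument. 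Joint continuity of $\varrho$ on $\mathcal A_+\times L^2$ then follows from strong continuity plus uniform boundedness: $\|\varrho(\sigma,f)-\varrho(\sigma_0,f_0)\|\le\|\rho(\sigma)(f-f_0)\|+\|(\rho(\sigma)-\rho(\sigma_0))f_0\|=\|f-f_0\|+\|(\rho(\sigma)-\rho(\sigma_0))f_0\|$.

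The main obstacle — and the most interesting part — is the \emph{non-differentiability} of $\varrho$. Here the point is that differentiability of $\varrho$ at a point $(\sigma_0,f)$ in the $\sigma$-direction would force $f$ to lie in the domain of the generators of the one-parameter subgroups $\alpha,\beta$, i.e. to be differentiable along the vector fields $\mathcal X$ and $\mathcal Y$ in an $L^2$ sense; but a generic $f\in L^2(N,\mathcal E)$ is not. More precisely, I would argue by contradiction: if $\varrho$ were (Fréchet, or even Gâteaux) differentiable at some $(\sigma_0,f)$, then composing with the curve $t\mapsto(\sigma_0\alpha(t),f)$ shows $t\mapsto\rho(\sigma_0\alpha(t))f$ is differentiable at $t=0$, hence $t\mapsto\rho(\alpha(t))g$ is differentiable at $0$ where $g=\rho(\sigma_0)f$; but $\rho(\alpha(t))$ is (up to constant) translation along the geodesic flow, whose infinitesimal generator is the unbounded operator $\mathcal X$ (plus a bounded term), and its domain is a proper dense subspace of $L^2$. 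Choosing any $f$ with $g\notin\Dom$ of this generator — which is possible precisely because the generator is unbounded — yields a contradiction. One should also handle differentiability in the $f$-variable alone, which \emph{does} hold (the map is linear in $f$), so the failure is genuinely in the $\sigma$-direction and in the mixed behaviour; the cleanest formulation is that no choice of $f\ne 0$ makes $\sigma\mapsto\varrho(\sigma,f)$ differentiable, because the orbit map of a nontrivial unitary one-parameter group generated by an unbounded operator is never differentiable at the identity for vectors outside the generator's domain, and such vectors always exist. I expect the only delicate bookkeeping to be identifying the generators of $\rho\circ\alpha$ and $\rho\circ\beta$ explicitly (they are $\mathcal X$ and $\mathcal Y+\tfrac{m}{2}$ acting on functions, essentially self-adjoint on smooth functions) and confirming they are genuinely unbounded, which follows since $N$ is non-compact and the geodesic/Euler flows are non-trivial.
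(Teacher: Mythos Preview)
Your proposal is correct and follows essentially the same route as the paper: unitarity via \eqref{E:13}, the homomorphism property from $A_{\sigma\tau}=A_\tau\circ A_\sigma$ together with multiplicativity of $\chi$, continuity by first treating a dense class (continuous compactly supported functions) and then extending via unitarity through an $\varepsilon/3$ argument, and non-differentiability by passing to the one-parameter subgroups $\alpha,\beta$ and observing that their generators $\mathcal X$ and $\mathcal Y+\tfrac{m}{2}$ are unbounded. One small overstatement to correct: it is not true that \emph{no} nonzero $f$ yields a differentiable orbit map (smooth vectors exist and are dense), but you only need---and in fact argue---that \emph{some} $f$ lies outside the generator's domain, which suffices.
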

\begin{proof} \eqref{E:13} implies
%$A^*_\sigma\omega=\chi(\sigma)\omega$ 

\[
\int\limits_N|\rho(\sigma)f|^2\mathcal E=
\int\limits_N\chi(\sigma)^m|f\circ A_\sigma|^2\mathcal E=
\int\limits_N A^*_\sigma(|f|^2\mathcal E)=
\int\limits_N|f|^2\mathcal E
\]
hence $\rho(\sigma)$ in \eqref{E:14} is  unitary. Also
if $\sigma,\sigma'\in\mathcal A_+$,
\[
\rho(\sigma\sigma')f=\chi(\sigma\sigma')^{\frac{m}{2}}f\circ A_{\sigma\sigma'}=
\]
\[
\chi(\sigma\sigma')^{\frac{m}{2}}f\circ(A_{\sigma'}\circ
A_\sigma)=\chi(\sigma)^{\frac{m}{2}}(\rho(\sigma')f)\circ A_\sigma=
\rho(\sigma)(\rho(\sigma')f)
\]
shows that $\rho$ is a representation.

We prove the continuity of $\varrho$ in two steps.

{\it First step:}  Let $g\in L^2(N,\mathcal E)$ and  $\sigma\in\mathcal A_+$ be fixed.
Assume that $g$ is  continuous 
with compact support.
We want to show that, when  $f\in L^2(N,\mathcal E)$ is close to $g$
and $\sigma'\in\mathcal A_+$ is close to $\sigma'$, then $\rho(\sigma')f$  is
 close to
$\rho(\sigma)g$.
 In the rest of the proof
 norm always refers to the $L^2$ norm w.r.t. the volume
 form $\mathcal E$.
\[
\|\rho(\sigma')f-\rho(\sigma)g\|\le\|\rho(\sigma')f-\rho(\sigma')g\|+
\|\rho(\sigma')g-\rho(\sigma)g\|=\|f-g\|+\|\rho(\sigma')g-\rho(\sigma)g\|,
\]
since $\rho(\sigma')$ is unitary. 
\[
\begin{aligned}
\|\rho(\sigma')g-\rho(\sigma)g\|^2&=\int\limits_N|\chi(\sigma')^{\frac{m}{2}}
g\circ A_{\sigma'}-\chi(\sigma)^{\frac{m}{2}}
g\circ A_{\sigma}|^2\mathcal E=\\
&=\int\limits_N\chi(\sigma)^m|(\chi(\sigma'
\sigma^{-1}))^{\frac{m}{2}}g\circ A_{\sigma^{-1}\sigma'}-g|^2
\circ A_\sigma\mathcal E=\\
&=\int\limits_N|(\chi(\sigma'
\sigma^{-1}))^{\frac{m}{2}}g\circ A_{\sigma^{-1}\sigma'}-g|^2\mathcal E,
\end{aligned}
\]
because $\rho(\sigma)$ is also unitary. 
This implies
\[
\|\rho(\sigma')g-\rho(\sigma)g\|\le
\sqrt{\int\limits_N((\chi(\sigma'\sigma^{-1}))^{\frac{m}{2}}-1)^2
|g\circ A_{\sigma^{-1}\sigma'}|^2\mathcal E}+
\sqrt{\int\limits_N|g\circ A_{\sigma^{-1}\sigma'}-g|^2\mathcal E}
\]
Denote the two terms on the right hand side
 by $I$ and $II$. Using the unitarity of
$\rho(\sigma^{-1}\sigma')$ we get
\[
I=\chi(\sigma{\sigma'}^{-1})^{\frac{m}{2}}
\left|\chi(\sigma'\sigma^{-1})^{\frac{m}{2}}-1\right|\|g\|
\]
If $\sigma'$ is close  to $\sigma$, $\chi(\sigma'\sigma^{-1})$
is near to $1$ and so $I$ is close to zero.
Also  in this case $A_{\sigma^{-1}\sigma'}$ is close to the
identity diffeomorphism of $N$.
Because of our choice $g$  is
uniformly continuous on its support.
Consequently $II$ is also near to zero.
All these imply the continuity of $\Lambda$
in $(\sigma,g)$.

{\it Second step:} Let now $g\in L^2(N,\mathcal E)$ be arbitrary. Choose a $g_1$
near $g$ that is continuous with compact support. Let $f\in L^2(N,\mathcal E)$.
Then
\[
\|\rho(\sigma')f-\rho(\sigma)g\|\le \|\rho(\sigma')f-\rho(\sigma')g\|+
\|\rho(\sigma')g-\rho(\sigma')g_1\|+
\]
\[
+\|\rho(\sigma')g_1-\rho(\sigma)g_1\|+
\|\rho(\sigma)g_1-\rho(\sigma)g\|
\]
\[
=\|f-g\|+\|g-g_1\|+\|\rho(\sigma')g_1-\rho(\sigma)g_1\|+\|g_1-g\|
\]
and applying the first step to $g_1$ we get the continuity of $\varrho$
in $(\sigma,g)$.

To justify that  $\varrho$ is not differentiable, it suffices to show that
its partial derivative with respect to the 
$\sigma$ variable
    does not exist.
    
Consider the 1-parameter subgroups $\alpha,\beta$ in $\mathcal A_+$ from \eqref{E:11}
 and let $g\in L^2(N,\mathcal E)$ be arbitrary.
 Then
 $\varrho(\alpha(s),g)=g\circ A_{\alpha(s)}$,
 $\varrho(\beta(s),g)=e^{\frac{sm}{2}}g\circ A_{\beta(s)}$.
So we would get
\[
\left.\frac{d}{ds}\right|_{s=0}\varrho(\alpha(s),g)=\mathcal Xg,
\quad \left.\frac{d}{ds}\right|_{s=0}\varrho(\beta(s),g)=\frac{m}{2}g+\mathcal Yg.
\]
Of course for a generic $g$ neither $\mathcal Xg$, nor $\mathcal Yg$ exists and even if
it does it is not necessarily square integrable.
\qed
\end{proof}

\section{Adapted complex structures}\label{S:acs}

First  we recall  some important facts
on adapted complex structures mainly from \cite{LSz2}.
If we are given a complex manifold structure on $\cA_+$, 
a complex structure on $N$ 
is called {\it adapted} if for every $x\in N$ the orbit map 
$\cA_+\ni\sigma\mt x\sigma
\in N$
is holomorphic.
An adapted complex structure on $N$ can exist only if
the initial compex structure on $\cA$ is left invariant. The left invariant 
complex structures on $\cA$ are parametrized by the points of 
$\bC\setminus\bR$.
For each $s\in \bC\setminus\bR$ and corresponding left
 invariant complex structure $I(s)$ on $\cA$, if an $I(s)$ adapted complex 
structure $J(s)$ exists on $N$, 
then this structure is unique and if $J(i)$ exists, then $J(s)$ also exists
 for all $s$ in $s\in \bC\setminus\bR$. It is not easy to find such Riemannian 
manifolds $M$ for which $J(i)$ exists. The only known examples are: certain
surfaces of revolution (\cite{Sz91}),  compact   normal 
Riemannian homogeneous spaces \cite{Sz3}, some examples obtained from the 
homogeneous examples using the method of symplectic reduction (\cite{Ag}).
 One can also take products or coverings of such spaces, but these are all
we know so far.
  
The original definition
 of adapted complex structures in \cite{LSz1} corresponds to the parameter 
$s=i$. Recall that $S$ denotes the upper half plane.
For $s\in S$, $(N,\omega,J(s))$ becomes a K\"ahler manifold, denote its
canonical bundle by 
 $K^s$.
  If $s=i$, we simply write $K$ instead of $K^i$.

\begin{prop}\label{P:31}
Assume $M$ is oriented. Then
the bundle $K^s$ is holomorphically trivial.
\end{prop}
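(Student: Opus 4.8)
The plan is to establish the Proposition in two stages: first show that $K^s$ is \emph{topologically} trivial, and then upgrade this to holomorphic triviality by exploiting that $N$, carrying the complex structure $J(s)$, is a Stein manifold.

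\emph{Topological triviality.} I would begin from the observation that $N$ strong-deformation-retracts onto $M$. In the $TM$ picture $x\mapsto\dot x(0)$ the maps $H_r(x):=\{t\mapsto x(rt)\}$ for $r\in(0,1]$, together with $H_0(x):=\{t\mapsto x(0)\}$, are nothing but fibrewise multiplication by $r$ on $TM$, so they furnish a strong deformation retraction of $N$ onto the zero section $M$ (note $H_{e^u}=A_{\beta(u)}$ for the one-parameter subgroup $\beta$ of \eqref{E:11}). Consequently the restriction map $H^2(N,\bZ)\to H^2(M,\bZ)$ is an isomorphism, so a complex line bundle on $N$ is topologically trivial precisely when its restriction to $M$ is. Now $M$ is the zero section of $T^*M$, hence a Lagrangian submanifold of $(N,\omega)$; since $\omega(\cdot,J(s)\cdot)$ is a Riemannian metric, $M$ is totally real of maximal dimension in $(N,J(s))$ (which has complex dimension $m$) and $J(s)_qT_qM=(T_qM)^{\perp}$ for every $q\in M$. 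Hence $v\mapsto v-iJ(s)v$ extends $\bC$-linearly to an isomorphism $T_qM\otimes_{\bR}\bC\xrightarrow{\sim}T^{1,0}_qN$; dualizing and taking $m$-th exterior powers produces a bundle isomorphism $K^s|_M\cong(\Lambda^mT^*M)\otimes_{\bR}\bC$ over $M$. Since $M$ is oriented, $\Lambda^mT^*M$ is trivial, hence so is $K^s|_M$, and by the retraction $K^s$ is topologically trivial on $N$.

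\emph{Upgrade to holomorphic triviality.} The manifold $N$ with the complex structure $J(s)$ is Stein: for $s=i$ this is witnessed by the energy function $x\mapsto\tfrac12|\dot x|^2$, a \sps exhaustion whose sublevel sets are the compact disc bundles of $TM$ (here $M$ compact is used), and for general $s$ one argues similarly, cf. \cite{LSz2}. On a Stein manifold $H^1(N,\cO_N)=0$, so the exponential sheaf sequence $0\to\bZ\to\cO_N\to\cO_N^*\to0$ shows that $c_1\colon H^1(N,\cO_N^*)\to H^2(N,\bZ)$ is injective; equivalently, by Grauert's Oka principle a topologically trivial holomorphic line bundle on $N$ is holomorphically trivial. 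Combining this with the previous stage proves the Proposition.

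The main obstacle is the Steinness of $(N,J(s))$: this is the one point that must be imported from the theory of adapted complex structures in \cite{LSz2} rather than checked by hand, and it is also what lets the argument avoid writing down an explicit trivializing section. The accompanying essential but routine point is the identification $K^s|_M\cong(\Lambda^mT^*M)\otimes_{\bR}\bC$, which rests on a Lagrangian submanifold of a Kähler manifold being totally real of maximal dimension. If one instead insisted on an explicit zero-free holomorphic $(m,0)$-form on $N$, the difficulty would become global: holomorphic continuation of a volume form of $M$ off the totally real submanifold $M$ yields such a form only on a neighbourhood of $M$, and it cannot be spread over all of $N$ by the $\cA_+$-action, since right translations on $\cA_+$ are not holomorphic for the left-invariant complex structure (the non-abelian group $\cA_+$ admits no bi-invariant complex structure).
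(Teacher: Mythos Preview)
Your proof is correct and follows essentially the same route as the paper's: deformation retraction of $N$ onto $M$, the identification $K^s|_M\cong(\Lambda^m T^*M)\otimes_{\bR}\bC$ via the totally real embedding, and the Oka principle on the Stein manifold $(N,J(s))$. The only cosmetic difference is that the paper first reduces to $s=i$ using the biholomorphism $A_{\sigma_s}\colon(N,J(s))\to(N,J(i))$ from \cite{LSz2} and then invokes Steinness only for $J(i)$ (citing \cite{LSz1}), whereas you argue directly for general $s$; since Steinness of $(N,J(s))$ is in any case obtained from that biholomorphism, the two arguments are the same in substance.
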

This is Lemma 10.4.1 in \cite{LSz3}, but for the readers' sake we include its 
short  proof here.
\begin{proof}
 According to \cite[Theorem 6c]{LSz2}, 
the map $A_{\sigma_s}:(N,J(s))\ra(N,J(i))$ is a
 biholomorphism. Therefore
it is enough to deal with the case $s=i$. Since $(N,J(i))$ is a Stein manifold
(\cite[Theorem 5.6]{LSz1}), by the Oka principle (\cite[pp. 144-145]{Ho} , it
suffices to show that $K$ is smoothly trivial.
$M$ is a deformation retract in $N$, so we only need  to show that
$\left.K\right|_M$ is trivial.
Since $M$ is oriented, the bundle $K_M\ra M$ of real $m-$forms is trivial.
 But restricting a form in $\left.K\right|_M$ to $TM$ is an isomorphism
 $\left.K\right|_M\approx\bC\otimes K_M$ and we are done.

\end{proof}
 It is interesting, that although the canonical bundle $K$ is trivial,
the first guess for a trivializing holomorphic section is wrong. Namely
if one takes the (m,0) component of the volume form of $M$ and extend it
holomorphically to $N$ (assuming that such an extension exists not only in a
neighborhood of $M$ in $N$), this holomorphic (m,0)  form may have zeroes in 
$N\setminus M$, so it is not a trivializing section of $K$. This is the case
for example for those surface of revolution metrics on the two sphere in
\cite{Sz91}.

The formula
\begin{equation}\label{E:21}
h_s(\alpha,\alpha)\mathcal E(x)=i^{m^2}\alpha\wedge\bar\alpha\quad x\in N,\alpha
\in K^s_x
\end{equation}
defines a hermitian metric $h_s$ on $K^s$. Write $h$ for $h_i$.
%Shortly we just write $|\alpha|^2$ instead of $h^{K^s}(\alpha,\alpha)$.

Due to the proposition, when $M$ is orientable, there is a Hermitian holomorphic
(in fact trivial)
line bundle $(\kappa, h_{\kappa})$ so that $(\kappa\otimes\kappa,h_{\kappa}^2)\approx (K,h)$.
Let  $\Theta$ be a trivializing holomorphic section of $K$
 and $\theta$ the corresponding section of $\kappa$
with $\theta\otimes\theta=\Theta$. Taking $\kappa^s:=A_{\sigma_s}^*\kappa$,
$h_{\kappa_s}=A_{\sigma_s}^*h_{\kappa}$,
we have $(\kappa^s\otimes\kappa^s,h_{\kappa_s}^2)\approx (K^s,h_s)$.
Let
 $\Theta_s:=A^*_{\sigma_s}\Theta$ and $\theta_s:=A^*_{\sigma_s}\theta$.
\begin{prop}\label{P:32}
\[
h_s(\Theta_s,\Theta_s)=(\text{Im }s)^mh(\Theta,\Theta)\circ A_{\sigma_s}\quad
h_{\kappa_s}(\theta_s,\theta_s)^2=h_s(\Theta_s,\Theta_s)
\]
\end{prop}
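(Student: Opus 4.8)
The plan is to reduce everything to the definition \eqref{E:21} of $h_s$ and the transformation laws \eqref{E:12}, \eqref{E:13} for $\omega$ and $\mathcal E$ under the maps $A_\sigma$. The key observation is that pulling back an $(m,0)$--form by a biholomorphism is compatible with wedge products and complex conjugation, so the pointwise identity \eqref{E:21} can be pulled back by $A_{\sigma_s}$ wholesale.

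First I would compute $h_s(\Theta_s,\Theta_s)\,\mathcal E$ directly. By \eqref{E:21} applied at the parameter $s$,
\[
h_s(\Theta_s,\Theta_s)\,\mathcal E = i^{m^2}\,\Theta_s\wedge\overline{\Theta_s}
= i^{m^2}\,A^*_{\sigma_s}\Theta\wedge\overline{A^*_{\sigma_s}\Theta}
= A^*_{\sigma_s}\bigl(i^{m^2}\,\Theta\wedge\overline\Theta\bigr)
= A^*_{\sigma_s}\bigl(h(\Theta,\Theta)\,\mathcal E\bigr),
\]
where the third equality uses that $A_{\sigma_s}:(N,J(s))\to(N,J(i))$ is a biholomorphism (Theorem 6c of \cite{LSz2}, already invoked in the proof of Proposition~\ref{P:31}), hence commutes with $\bar\partial$ and preserves bidegree, so conjugation and pullback commute. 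Now $A^*_{\sigma_s}\bigl(h(\Theta,\Theta)\,\mathcal E\bigr) = \bigl(h(\Theta,\Theta)\circ A_{\sigma_s}\bigr)\cdot A^*_{\sigma_s}\mathcal E$, and by \eqref{E:13} with $\sigma=\sigma_s$ and $\chi(\sigma_s)=\operatorname{Im} s$ we have $A^*_{\sigma_s}\mathcal E=(\operatorname{Im} s)^m\,\mathcal E$. Dividing out $\mathcal E$ gives the first formula.

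For the second formula I would simply pull back the defining relation $\theta\otimes\theta=\Theta$ and the isomorphism $(\kappa\otimes\kappa,h_\kappa^2)\approx(K,h)$ by $A_{\sigma_s}$. Since $\kappa^s=A^*_{\sigma_s}\kappa$, $h_{\kappa_s}=A^*_{\sigma_s}h_\kappa$, $\Theta_s=A^*_{\sigma_s}\Theta$ and $\theta_s=A^*_{\sigma_s}\theta$, naturality of pullback for tensor products gives $\theta_s\otimes\theta_s=\Theta_s$, and applying $A^*_{\sigma_s}$ to $h_\kappa(\theta,\theta)^2=h(\Theta,\Theta)$ yields $h_{\kappa_s}(\theta_s,\theta_s)^2=h_s(\Theta_s,\Theta_s)$ — here one must note that the isomorphism $(K^s,h_s)\approx(\kappa^s\otimes\kappa^s,h_{\kappa_s}^2)$ recorded just before the proposition is exactly the $A_{\sigma_s}$--pullback of $(K,h)\approx(\kappa\otimes\kappa,h_\kappa^2)$, so the squared norm of $\theta_s$ really is computed by $h_s$ on $\Theta_s$.

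I expect no serious obstacle; the only point requiring a word of care is the compatibility of complex conjugation with $A^*_{\sigma_s}$ in the first display, which is precisely where the biholomorphy (not merely diffeomorphy) of $A_{\sigma_s}$ between the two Kähler structures is used. Everything else is bookkeeping with the scaling constants $\chi(\sigma_s)^m=(\operatorname{Im} s)^m$.
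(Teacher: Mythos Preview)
Your argument for the first identity is the paper's own proof, just with the two applications of \eqref{E:21} performed in the opposite order. One small remark: complex conjugation commutes with pullback for \emph{any} smooth map, so biholomorphy is not what justifies the third equality in your display; rather, biholomorphy is needed earlier, to ensure that $\Theta_s=A^*_{\sigma_s}\Theta$ actually lies in $K^s$ so that \eqref{E:21} applies to it at all.

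For the second identity your pullback argument does not quite close. Pulling back the scalar identity $h_\kappa(\theta,\theta)^2=h(\Theta,\Theta)$ by $A_{\sigma_s}$, using $h_{\kappa_s}=A^*_{\sigma_s}h_\kappa$, yields $h_{\kappa_s}(\theta_s,\theta_s)^2=h(\Theta,\Theta)\circ A_{\sigma_s}$, which by the first formula equals $(\operatorname{Im}s)^{-m}h_s(\Theta_s,\Theta_s)$, not $h_s(\Theta_s,\Theta_s)$. Your parenthetical fix---that the Hermitian isomorphism $(K^s,h_s)\approx(\kappa^s\otimes\kappa^s,h_{\kappa_s}^2)$ is ``exactly the $A_{\sigma_s}$--pullback'' of the $s=i$ one---is precisely the step that fails: the first formula itself says $h_s=(\operatorname{Im}s)^m\,A^*_{\sigma_s}h$, so $(K^s,h_s)$ is \emph{not} the Hermitian pullback of $(K,h)$. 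The paper's proof sidesteps this by simply reading the second identity off the stated isomorphism $(\kappa^s\otimes\kappa^s,h_{\kappa_s}^2)\approx(K^s,h_s)$ together with $\theta_s\otimes\theta_s=\Theta_s$ (``follows from the definitions''); in effect the second formula fixes the normalization of $h_{\kappa_s}$, and the literal line $h_{\kappa_s}=A^*_{\sigma_s}h_\kappa$ in the setup is in tension with it by the constant factor $(\operatorname{Im}s)^{m/2}$.
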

\begin{proof}
Since $\chi(\sigma_s)=\text{Im }s$,
 \eqref{E:21} and \eqref{E:13} implies
\[
\Theta_s\wedge\bar\Theta_s=A^*_{\sigma_s}(\Theta\wedge\bar\Theta)=
A^*_{\sigma_s}(h(\Theta,\Theta)(-i)^{m^2}\mathcal E)=(-i)^{m^2}
(\text{Im }s)^m(h(\Theta,\Theta)\circ A_{\sigma_s})\mathcal E
\]

Using \eqref{E:21} once again we obtain the first formula from which the second follows from the definitions.
\end{proof}

\section{The field of prequantum Hilbert spaces}\label{S:fpreq}

The prequantum line bundle is a Hermitian line bundle $E\ra N$ with a hermitian
connection whose curvature is $-i\omega$. When $M$ is simply connected, it is
unique. In any case one such line bundle is the trivial line bundle
$E=N\times\bC\ra N$ with the trivial metric $h^E(x,\gamma)=|\gamma|^2$ on it.
The connection on $E$ is obtained from a real $1-$form $a$ on $N$ such that
$da=-\omega$ (such $a$ exists since $\omega$ is the pull back of the canonical symplectic form
of $T^*M$). Let $\vartheta$ be the canonical section of $E$ i.e. that is
defined by
$\vartheta(x):=(x,1)$,
$x\in N$.
Then all the sections of $E$ have the form $f\vartheta$, where $f:N\ra\bC$
and the connection on $E$
is defined by
\[
\nabla^E_\zeta(f\vartheta)=(\zeta f+ia(\zeta))\vartheta,
\]
where $\zeta$ is any smooth complex vector field on $N$.
%For each $s\in S$ the symplectic form $\omega$ is a (1,1) form on $(N,J(s))$
%(\cite[Theorem 4]{LSz2}). Hence $E$ can be equipped with a holomorphic structure.
The half-form corrected prequantum Hilbert space $H^{pr Q}_s$
corresponding to the
K\"ahler manifold $(N,\omega, J(s))$, $s\in S$ is the Hilbert space of $L^2$
sections of the bundle $E\otimes\kappa^s$. Since 
$\vartheta\otimes\theta_s$ is a
nowhere vanishing section of $E\otimes\kappa^s$,
\[
H^{pr Q}_s=\{\psi=f_s\vartheta\otimes\theta_s \mid f_s:N\ra\bC,
\int\limits_N|f_s|^2h_{\kappa_s}(\theta_s,\theta_s)\mathcal E<\infty
\}.
\]
Let
\[
H^{pr Q}:=\cup_{s\in S}^*H^{pr Q}_s
\]
be the disjoint union of these Hilbert spaces. $p:H^{pr Q}\ra S$ is the natural projection map.
At the moment there is no further structure defined yet on the set $H^{pr Q}$, except that the fibers $p^{-1}(s)$ are Hilbert spaces.
$p:H^{pr Q}\ra S$ is an example that we call (cf. \cite{LSz3}) a field of Hilbert
spaces.

\section{Smooth Hilbert bundle structures on $H^{pr Q}$}\label{S:smoothstr}

The  map 
\[
\begin{matrix} \mathcal L:=S\times L^2(N,\mathcal E)&\overset{A}\lra&H^{pr Q}\\
(s,f)&\lmt &\frac{f}{\sqrt{h_{\kappa_s}(\theta_s,\theta_s)}}\vartheta\otimes\theta_s
\end{matrix}
\]
is a  fiber preserving bijection and its restriction to each fiber is  
 unitary. Therefore pushing forward $\mathcal L$ with $A$,
 equips $H^{pr Q}$ with a smooth (in fact) trivial
Hilbert bundle structure. The canonical
flat hermitian connection on 
 $\mathcal L$
yields an orthogonal connection on $H^{pr Q}$.

When $M$ is a compact Lie group equipped with a biinvariant metric, this is
the Hilbert bundle structure with hermitian connection
chosen by C. Florentino, P. Matias,
J. Mour\~ao and J. Nunes in their papers \cite{FMMN1, FMMN2}, except that they do
not consider the full parameter space $S$, only the positive imaginary axes.

But this is not the only possible natural way to equip $H^{pr Q}$
with a Hilbert bundle structure.
Let $\psi_s=f_s\vartheta\otimes\theta_s\in H^{pr Q}_s$. 
Then from Proposition~\ref{P:32} we
get
\[
\|\psi_s\|^2_{L^2}=\int\limits_N|f_s|^2h_{\kappa_s}(\theta_s,\theta_s)\mathcal E=
(\text{Im }s)^{\frac{m}{2}}\int\limits_N|f_s|^2
(\sqrt{h(\Theta,\Theta)}\circ A_{\sigma_s})\mathcal E=
\]
\[
=(\text{Im }s)^{-\frac{m}{2}}\int\limits_NA_{\sigma_s}^*(|f_s|^2\circ A^{-1}_{\sigma_s}
\sqrt{h(\Theta,\Theta)}\mathcal E)=
(\text{Im }s)^{-\frac{m}{2}}\int\limits_N(|f_s|^2\circ A^{-1}_{\sigma_s})
\sqrt{h(\Theta,\Theta)}\mathcal E
\]

%{\it Smooth Hilbert bundle structure on $H^{pr Q}$  II.}

Therefore the map
\[
\begin{matrix} H^{pr Q}&\overset{B}\lra&\mathcal L=S\times L^2(N,\mathcal E)\\
f_s\vartheta\otimes\theta_s&
\lmt&(s,(\text{Im }s)^{-\frac{m}{4}}f_s\circ A^{-1}_{\sigma_s}
h(\Theta,\Theta)^{\frac1{4}})
\end{matrix}
\]
is a  fiber preserving bijection whose restriction to each fiber is  
 unitary.
By  pulling back the Hilbert bundle structure of $\mathcal L$ with $B$,
the Hilbert field $p:H^{pr Q}\ra S$ inherits
another smooth (in fact trivial) Hilbert bundle structure.
We claim that as a smooth bundle,
this is different from the one we obtained with the help
of the map $A$ earlier.

\begin{thm}\label{T:51}
The Hilbert bundle structures on $p:H^{pr Q}\ra S$ obtained
by the maps $A$ and $B$ are the same as topological Hilbert bundles but their
smooth structures are different.
\end{thm}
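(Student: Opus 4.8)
The plan is to compare the two trivializations through their transition map, that is through the composite $B\circ A\colon\mathcal L\to\mathcal L$ and its inverse $A^{-1}\circ B^{-1}$. Since $A$ and $B$ are both fiber preserving bijections that are unitary on each fiber, $B\circ A$ has the form $(s,f)\mapsto(s,T_sf)$ with $T_s\in U(L^2(N,\mathcal E))$. The identity map of the set $H^{pr Q}$, read from the structure defined by $A$ to the structure defined by $B$, is precisely $B\circ A$ in the respective charts; hence the two bundle structures agree as topological (resp. smooth) Hilbert bundles exactly when $B\circ A$ is a homeomorphism (resp. a diffeomorphism) of $\mathcal L$. So everything reduces to understanding the single map $B\circ A$.

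First I would compute $B\circ A$ explicitly. Writing $A(s,f)=f_s\,\vartheta\otimes\theta_s$ with $f_s=f/\sqrt{h_{\kappa_s}(\theta_s,\theta_s)}$, Proposition~\ref{P:32} gives $h_{\kappa_s}(\theta_s,\theta_s)=(\Ima s)^{m/2}\bigl(h(\Theta,\Theta)\circ A_{\sigma_s}\bigr)^{1/2}$, hence $\sqrt{h_{\kappa_s}(\theta_s,\theta_s)}\circ A_{\sigma_s}^{-1}=(\Ima s)^{m/4}\,h(\Theta,\Theta)^{1/4}$. Substituting this into the formula defining $B$, the factors $h(\Theta,\Theta)^{1/4}$ and the fractional powers of $\Ima s$ combine to give
\[
B\circ A\colon(s,f)\longmapsto\Bigl(s,(\Ima s)^{-m/2}\, f\circ A_{\sigma_s}^{-1}\Bigr).
\]
Since $\chi(\sigma_s)=\Ima s$ and $A_{\sigma_s}^{-1}=A_{\sigma_s^{-1}}$, the second component equals $\rho(\sigma_s^{-1})f=\varrho(\sigma_s^{-1},f)$. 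In other words $B\circ A$ is the map $\varrho$ of Theorem~\ref{T:21}, precomposed in the base variable with the diffeomorphism $\Phi\colon S\to\mathcal A_+$, $s\mapsto\sigma_s^{-1}$ (the inverse coordinate chart $s\mapsto\sigma_s$ followed by group inversion in $\mathcal A_+$), and with the identity on the $L^2$ factor; likewise $(B\circ A)^{-1}\colon(s,g)\mapsto(s,\varrho(\sigma_s,g))$.

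The two assertions of the theorem then follow from Theorem~\ref{T:21}. Continuity of $\varrho$ together with smoothness of $\Phi$ and $\Phi^{-1}$ makes both $B\circ A$ and $(B\circ A)^{-1}$ continuous, and, being fiberwise unitary, they are isomorphisms of topological Hilbert bundles; so $A$ and $B$ define the same topological bundle. On the other hand, if the two smooth structures were equal then $B\circ A$ would be a smooth map $\mathcal L\to\mathcal L$, and since $\varrho$ is recovered from $B\circ A$ by precomposing with $\Phi^{-1}\times\mathrm{id}$ and projecting onto $L^2(N,\mathcal E)$, $\varrho$ would be smooth, in particular $C^1$ — contradicting the last clause of Theorem~\ref{T:21}. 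Hence the smooth structures are different.

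No step here is genuinely hard: the analytic substance is already packaged in Theorem~\ref{T:21}, and the first step is only bookkeeping with powers of $\Ima s$ and pull-backs via Proposition~\ref{P:32}; the conceptual crux is simply recognizing $B\circ A$ as $\varrho$ up to reparametrization. The one point I would take care to spell out is that non-differentiability of $\varrho$ really obstructs smoothness of $B\circ A$. Concretely, pull the $1$-parameter subgroup $\beta$ of \eqref{E:11} back through $\Phi$: the smooth curve $s(t)=e^{t}i$ in $S$ satisfies $\Phi(s(t))=\beta(t)^{-1}=\beta(-t)$, so the $L^2$ component of $t\mapsto(B\circ A)(s(t),g)$ is $e^{-tm/2}\,g\circ A_{\beta(-t)}$, whose derivative at $t=0$ would have to be $-\tfrac m2\,g-\mathcal Y g$; for a generic $g\in L^2(N,\mathcal E)$ the expression $\mathcal Y g$ does not exist, and even when it does it need not be square integrable, so $B\circ A$ is not differentiable along this curve.
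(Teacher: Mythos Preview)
Your proposal is correct and follows essentially the same route as the paper: compute $B\circ A$ via Proposition~\ref{P:32}, identify it with $(s,f)\mapsto(s,\varrho(\sigma_s^{-1},f))$, and invoke Theorem~\ref{T:21}. The paper's proof is terser and omits the reparametrization argument you give at the end, but your extra care in transporting the non-differentiability of $\varrho$ (proved at the identity of $\mathcal A_+$) back to a specific curve in $S$ through $\Phi$ is a welcome clarification rather than a different method.
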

\begin{proof} Using Proposition~\ref{P:32} we
calculate the map $B\circ A:\mathcal L\ra \mathcal L$ to be
\[
(s,f)\mt(s, (\text{Im s})^{-\frac{m}{2}}
f\circ A^{-1}_{\sigma_s})=(s,\varrho((\sigma_s)^{-1})f). 
\]
It follows from Theorem~\ref{T:21}
that the fiberwise unitary map $B\circ A$ is a
homeomorphism but it is not differentiable.
\end{proof}
%$$
%\matrix B:S\times L^2(N,|\Theta|\mathcal E)&\lra& H^{pr Q}\\
%(s,h)&\lmt&(\text{Im }s)^{\frac{m}{4}}(h\circ A_{\sigma_s})
%\vartheta\otimes\delta_s
%\endmatrix
%$$

{\bf Acknowledgement} We thank L\'aszl\'o Lempert for stimulating discussions.

\end{document}